\definecolor{Blue}{rgb}{0,0,0.9}
\definecolor{Red}{rgb}{0.9,0,0}
\newtheorem{definition}{Definition}[section]
\newtheorem*{definition*}{Definition}
\newtheorem*{teorema*}{Teorema}
\newtheorem*{demo*}{Demonstração}
\newtheorem*{lema*}{Lema}
\newtheorem{prop}{Proposition}[section]
\newtheorem*{prop*}{Proposition}
\newtheorem{corolario}{Corollary}[section]
\newtheorem*{corolario*}{Corollary}
\newtheorem*{afirm*}{Afirmação}
\newtheorem*{axiom*}{Axioma}
\newtheorem{premise}{Premise}[section]
\newtheorem*{premise*}{Premise}
\newtheorem*{premise*3.2}{Premise 3.2.2}
\newtheorem*{premise*3.1}{Premise 3.2.1}
\theoremstyle{remark}
\newtheorem{remark}{\normalfont \textbf{Remark}}[section]
\newtheorem*{remark*}{\normalfont \textbf{Remark}}
\definecolor{quotemark}{gray}{0.7}
\def\fquote{%
	\@ifnextchar[{\fquote@i}{\fquote@i[]}%]
}%
\def\fquote@i[#1]{%
	\def\tempa{#1}%
	\@ifnextchar[{\fquote@ii}{\fquote@ii[]}%]
}%
\def\fquote@ii[#1]{%
	\def\tempb{#1}%
	\@ifnextchar[{\fquote@iii}{\fquote@iii[]}%]
}%
\def\fquote@iii[#1]{%
	\def\tempc{#1}%
	\vspace{1em}%
	\noindent%
	\begin{list}{}{%
			\setlength{\leftmargin}{0.1\textwidth}%
			\setlength{\rightmargin}{0.1\textwidth}%
		}%
		\item[]%
		\begin{picture}(0,0)%
			\put(-15,0){\makebox(0,0){\scalebox{3}{\textcolor{quotemark}{``}}}}%
		\end{picture}%
		\begingroup\itshape}%
	\def\endfquote{%
		\endgroup\par%
		\makebox[0pt][l]{%
			\hspace{0.8\textwidth}%
			\begin{picture}(0,0)(0,0)%
				\put(-246,15){\makebox(0,0){%
						\scalebox{3}{\color{quotemark}''}}}%
		\end{picture}}%
		\ifx\tempa\empty%
		\else%
		\ifx\tempc\empty%
		\hfill\rule{100pt}{0.5pt}\\\mbox{}\hfill\tempa,\ \emph{\tempb}%
		\else%
		\hfill\rule{100pt}{0.5pt}\\\mbox{}\hfill\tempa,\ \emph{\tempb},\ \tempc%
		\fi\fi\par%
		\vspace{0.5em}%
	\end{list}%
}%
\begin{document}
	
	\title{A Categorical View on the Principle of Relativity}
	
	\author[1]{L. M. Gaio}
	\author[2]{B. F. Rizzuti}
	
	\affil[1]{Programa de P\'os-Gradua\c{c}\~ao em Matem\'atica, Instituto de Ci\^encias Exatas,  Universidade Federal de Juiz de Fora 36036-900, Juiz de Fora - MG, Brazil\\  
 
 lucamauadgaio@ice.ufjf.br} 
	\affil[2]{Departamento de F\'isica, Instituto de Ci\^encias Exatas, Universidade Federal de Juiz de Fora, 36036-900, Juiz de Fora - MG, Brazil\\ 
 
 brunorizzuti@ice.ufjf.br}

	\date{}                     %% if you don't need date to appear
	\setcounter{Maxaffil}{0}
	\renewcommand\Affilfont{\itshape\small}

	%\author{Author 1}
	%\author{L. M. Gaio}
	%\email{lucamauadgaio@gmail.com}
	%\affiliation{Programa de Pós-graduação em Matemática, ICE, Universidade Federal de Juiz de Fora, Juiz de Fora - MG, Brazil}
	
	%\author{Author 2}
	%\author{B. F. Rizzuti}
	%\email{brunorizzuti@ice.ufjf.br}
	%\email{https://orcid.org/0000-0003-3667-1335}
	%\affiliation{Depto. de F\'isica, ICE, Universidade Federal de Juiz de Fora, Juiz de Fora - MG, Brazil}
	
	%\author{Author 3}
	%\author{C. Duarte}
	%\email{cristhianoduarte@gmail.com}
	%\affiliation{Schmid College of Science and Technology, Chapman University, One University Drive, Orange, CA 92866}

	%\maketitle {\bf MSC Numbers:} 03C99, 51P05, 97-02
	%46C05, 51P05
	
	\maketitle  
	
	\begin{abstract}
		Category theory plays a special character in mathematics - it unifies distinct branches under the same formalism. Despite this integrative power in math, it also seems to provide the proper foundations to the experimental physicist. In this work, we present another application of category in physics, related to the principle of relativity. The operational construction of (inertial) frames of reference indicates that only the movement between one and another frame is enough to differentiate both of them. This fact is hidden when one applies only group theory to connect frames. In fact, rotations and translations only change coordinates, keeping the frame inert. The change of frames is only attainable by boosts in the classical and relativistic regimes for both Galileo and Lorentz (Poincaré) groups. Besides providing a non-trivial example of application of category theory in physics, we also fulfill the presented gap when one directly applies group theory for connecting frames.
	\end{abstract}
	
	\textbf{Keywords:} Category theory, Principle of relativity, Operationalism.

	%\chapter{Categorias \texorpdfstring{$\mathbcal{Lor}$}{Lor} e \texorpdfstring{$\mathbcal{Gal}$}{Gal}}
\section{Introduction}
	\label{sec.1}
	
Category theory carries a special character in mathematics - it unifies many distinct areas which, at first glance, seem to have no resemblance. For instance, the lowest common multiple and direct sum of vector spaces as well as discrete topologies, fields of fractions and free groups are standard examples of structures assembled under the same umbrella \cite{leinster_basic_2014}. Despite this unifying power of mathematical constructions, we could look at category theory applied to physics. It is argued that it is precisely the structure the experimental physicist needs \cite{coecke_categories_2009}. We may be convinced of this claim with the following line of reasoning so described in \cite{bob_2006}. Let a physical system be denoted by $A$, which may represent a qubit, an electron, or even classical measurement data. The experimentalist performs an operation $f$ on it, resulting in a system $B$. We can think of $f$ as a measurement on the initial system and $B$ is a proxy to the resulting system together with the data encoding the measurement outcome. Schematically, this process is described by 
	\begin{equation}
		A \stackrel{f}{\longrightarrow} B.
	\end{equation}
	
	Other operations are allowed, such as 
	\begin{equation}
		B \stackrel{g}{\longrightarrow} C.
	\end{equation}
	Then, it is natural to expect a composition $g \circ f$ for the concatenation of the two operations. If we think of the composition as just one operation, we could also ask that
	\begin{equation}
		(h \circ g) \circ f = h \circ (g \circ f). 
	\end{equation}
	Well, to do nothing on $A$ can be accommodated by an identity 
	\begin{equation}
		A \stackrel{1_A}{\longrightarrow} A
	\end{equation}
	and, as a consequence, 
	\begin{equation}
		1_B \circ f = f \circ 1_A = f.
	\end{equation}
	
	This is exactly the structure of a category, that we now formalize with the following \cite{adamek2009abstract}.
	\begin{definition}
		A category $\mathcal{A}$ is composed of
		\begin{enumerate}[label=(\alph*)]
			\item a class $\mathcal{O}$ whose elements are called the objects of the category;
			
			\item for each pair of objects $A, B \in \mathcal{O}$ a class of maps (or morphisms) $f: A \rightarrow B$, denoted by $\mathcal{A}(A,B)$ which are pairwise disjoint and such that
			\begin{enumerate}[label=(\roman*)]
				\item for objects $A, B, C \in \mathcal{O}$ an associative composition is given
				\begin{align*}
					\mathcal{A}(A,B) \times \mathcal{A}(B,C) &\rightarrow \mathcal{A}(A,C) \\
					(f,g) &\mapsto g \circ f
				\end{align*}
				
				\item For each object $A \in \mathcal{O}$, there exists the identity map $\mathrm{id}_A : A \rightarrow A$, such that for each $f \in \mathcal{A}(A,B)$, $\mathrm{id}_B \circ f = f \circ \mathrm{id}_A = f$. 
			\end{enumerate}
		\end{enumerate}
	\end{definition}
	
	One of its applications is known as a Resource Theory, which is basically a category with a binary operation that allows one to compose objects \cite{coecke_mathematical_2016}. There is a wide range of applications, that runs from a resource theory of work and heat \cite{sparaciari_resource_2017} to quantum entanglement, passing through quantum computation \cite{chitambar_quantum_2019}. We may also trace some recent applications of category theory concerning foundational structure of classical field theories (e.g., Newtonian gravitation, general relativity and Yang-Mills theories) \cite{weatherall_categories_2015}. At last, we could cite the ref. \cite{baez_lauda_2011}, that exhibits a chronology of the growing role of categories and ($n$-categories) in areas such as Feynman diagrams, spin networks, string theory, loop quantum gravity, and topological quantum field theory. 
	
We go a step further in the direction of the deep liaison between category theory and physics, particularly connected with relativity. Actually, there is an entire research program focused on the ``categorification" of both special and general relativity theories. For instance, in \cite{guts}, the authors explore various approaches rooted in topos theory to construct the general relativity theory. Their perspective arises from the realization that when we view spacetime as a mere set of events, we cannot accommodate more intricate structures, such as time loops. Therefore, they advocate for the inclusion of topoi in the framework, which allows for a more complex structure of causal interactions between events. Another interesting proposal in this direction, from a more pragmatic standpoint, can be found in the reference \cite{Oziewicz}. The author introduces the concept of categorical relativity by considering a groupoid category of massive bodies in mutual motion. In this framework, the objects correspond to the massive bodies (which need not be inertial), while the morphisms represent the relative velocities between them. This stands in contrast to the reciprocal-velocity concept defined using Lorentz boosts in isometric special relativity.

A work that aligns closely with the topic being presented here is \cite{carvalho}. Following the approach of examining relativity theory through a categorical lens, the author defines specific categories in which the objects are inertial frames as well as categories where the objects are coordinate systems. In the former, the morphisms are boosts, while in the latter, Galileo (and Lorentz) transformations between coordinates serve as the corresponding morphisms. Inertial frames are connected to coordinates through functors, encompassing both the classical and relativistic domains.

The overarching goal of this approach is to address a fundamental assumption pertaining to the universal nature of physics. Specifically, the notion of being observer-free rather than solely Lorentz or diffeomorphic-invariant. In keeping with this methodology, in this work we will show how the homogeneity of spacetime can be extracted from pure categorical terms. As consequence, if no particular region is privileged, we can infer the principle of relativity, which asserts that the laws of physics must be covariant under the suitable transformations between inertial frames of reference. It may sound at first, that all this subject has already been run out via group theory. However, only a change such as \cite{naber_geometry_2012}
	\begin{equation}
		x'= \Lambda x
	\end{equation}
	does not differentiate a true change of reference from a simple change of coordinates. Here $x$ (and $x'$) denotes coordinates in Minkowski spacetime and $\Lambda$ is a Lorentz transformation. Actually, they denote the standard representation of the Lorentz group on the vector space formed by pair of events. In fact, $\Lambda$ encodes both rotations and boosts. The former does not change the frame of reference, while the latter does. The same argument also holds for translations, if we consider the Poincaré group, $x' = x + a$ - the reference is kept fixed, while changing only coordinates. Hence, we can conclude that change of references is only attainable when the two are connected by a boost. This inference is even clearer when we adopt an operational approach to construct underlying notions, such as the central one of a frame of reference \cite{blc2020}, which shall be revised in the next Section. The main goal of our work will be, then, to show how to connect different frames by using categorical arguments, leading to the so expected principle of relativity.  
	
	The work is divided as follows. In the next Section, we provide a motivation for our categorical description on relativity. Section \ref{sec.3} is devoted to a brief review of the Galilean and Lorentz groups representation on spacetime (actually, the representation is given in the associated vector space formed by pair of spacetime events. We make an abuse of notation though). The categories named \texorpdfstring{$\mathbcal{Lor}$}{Lor} and  \texorpdfstring{$\mathbcal{Gal}$}{Gal} will be presented in the Section \ref{sec.4}, where we lay our main conclusion - there are no privileged frames, exposed with a sequence of results, namely, both categories are complete and also that inertial frames of reference are small limits. The deep liaison between \texorpdfstring{$\mathbcal{Lor}$}{Lor} and  \texorpdfstring{$\mathbcal{Gal}$}{Gal} is explored in the Section \ref{sec.5}. We show that their connection is a functorial one.
	In Section \ref{sec.6} we present some alternative formulations to our construction and Section \ref{sec.7} is dedicated to the conclusions.

	Throughout the paper, we will use the term spacetime meaning a flat manifold, in both classical and relativistic regimes, without curvature or gravitational effects.
	
	\section{Motivation}
	\label{sec.2}
	
	Our starting point consists of constructing the physical space through the operational philosophy \cite{Opera09}. In order to not run astray, we will not develop this topic in much detail. We could however summarize its core in a somewhat short statement - all the definitions and results must be described in terms of operations, where an operations is an experimental prescription that can be replicated in a laboratory. Although related to Empiricism, Operationalism allows one to extrapolate processes that could be only reproducible in a lab, admitting the usage of abstraction to generate new concepts. The concept of continuity is an standard example. We cannot measure, for instance, $\pi$ as a distance between two points, after all the result is a rational number. We could, nonetheless, measure rational ones, such as, $3; 3,1; 3,14; \hdots$ in such a way that this sequence gets as close to $\pi$ as possible, 
	up to experimental precision. That way, each step is attainable with a clear experimental recipe - an operation, guaranteeing the very existence of the irrational $\pi$ \cite{grb}.
	
	After this brief intro concerning Operationalism, let us review how one can build the notion of a frame of reference, and together with it, the one of space. All the details can be found in \cite{blc2020}. The first observation is that there exists around us bodies whose distance between any two arbitrary and distinct points does not change with time. Once again, all notions such as points, distance and even time can be described by operational prescriptions \cite{grb, rizzuti_is_2021}. For example, a brick is, \textit{per se}, a particular example. We will call them \textit{rigid bodies}. Now, consider a wall. It is itself composed of multiple different bricks. But (we hope) the wall itself is rigid. If we take two bricks in this wall, the distance between them does not change with time, meaning the two of them together are, again, a rigid body.

    We can abstract this idea and glue together an infinite amount of rigid bodies in order to cover the entire universe with points. After this, we ignore the bodies leaving only the abstract idea of the location where the points over the bodies once were. If now we imagine a moving particle, at each moment in time it ``occupies'' a different point. The same can be said for rigid bodies. At this stage, they don't define the points themselves, but are located by them instead.

    To formalize the idea described above, we will define an equivalence relation in the set of all rigid bodies $\mathfrak{K}$. Given two bodies $A, B \in \mathfrak{K}$, we will say they are related if, and only if, the points of $A$ and $B$ do not move with respect to each other. Leveraging in the operationalistic philosophy adopted here, we could judge the relative movement between rigid bodies by looking at vectors defined by pair of points marked 
 in both bodies. Here the notion of vectors we are using comes from an equivalence relation of pairs of points. Consider a set $X$ whose elements we will call \textit{points}. One could think of it as literally marking crosses in rigid bodies with needles and the points being such intersection. Given $P, Q \in X$, consider an oriented line segment starting at $P$ and ending at $Q$, which we denote by $\overrightarrow{(P, Q)}$. Let $S$ denote the set of all such oriented segments by taking pairs of points of $X$. There is an equivalence relation in $S$ known as \textit{parallel transport} - which can be achieved by moving rules supported by squares - denoted by $\top$. Now, take the quotient set $\mathcal{V} = S/\top$. Then, a \textbf{vector} is an element of $\mathcal{V}$. This construction is explored further in \cite{blc2020}.
 
 With this definition, we avoid possible ambiguities in the following sense. If we have circular motion between rigid bodies,  only the distance between pair of points would not avoid relative motion. On the other hand, asking that the vectors are kept fixed, that is, not changing in time,\footnote{Once again, the very notion of time evolution can be operationally defined, see, for example, \cite{rizzuti_is_2021}.} we solve this issue. All the line of thought below is to be understood 
 within this perspective. In symbols:
    \begin{align*}
        A \, \delta \, B \Leftrightarrow \mbox{ the points of $A$ and $B$ do not move with respect to each other.}
    \end{align*}
    To check that it is in fact an equivalence relation, 
	\begin{enumerate}[label=(\alph*)]
		\item $A \, \delta \, A$, after all the points of $A$ do not move with respect to themselves - $A$ is a rigid body itself. 
		
		\item If $A \, \delta \, B$, then the points of $A$ don't move relative to the points of $B$, and, naturally, the points of $B$ don't move relative to the points of $A$. Thus, $B \, \delta \, A$.
		
		\item If $A \, \delta \, B $ and $B \, \delta \, C$, then the points of $A$ don't move relative to the ones in $B$. Also, the points of $B$ don't move relative to the points in $C$. Thus, the points of $A$ don't move relative to the points of $C$. That way, $A \, \delta \, C$.
	\end{enumerate}
	Thus, the $\delta$ defines an equivalence relation, as stated. This result motivates the following.
	\begin{definition} 
		A frame of reference is a class in $\mathfrak{K}/ \delta$. With more details
		$$ [F_1] = \left  \{ F \in \mathfrak{K} | \forall P_1 \in F_1, \forall P \in F, \frac{\partial \overrightarrow{P_1 P}}{\partial t}=\Vec{0} \right \}, $$ 	\end{definition}
    Where $\overrightarrow{P_1 P}$ is the vector defined by the points $P_1$ and $P$. All the points marked in this way located in a particular frame has a central role and deserves the definition below. 
	\begin{definition}
		We call \textit{physical space}, or space for short, the set of all points associated to a particular frame of reference. We denote it by $\mathcal{E}^3$.\footnote{Although this nomenclature has no direct relation to the frame, it is clear from the definition that a space is only defined once the corresponding frame has already been posed.} 
	\end{definition}
	
	We now ask if space is a unique entity. According to our definition, it is not for if a rigid body, say a car, is moving in relation to another, like a tree, they won't be in the same equivalence relation or frame. Another interesting feature, that can be extracted from our previous discussion, is that frames are distinguished from one another (and, of course, the corresponding spaces) by looking to their relative movement. This observation impels one to use Category theory to describe such a relation between frames. To get there, we will first review the basic concepts related to the Galilean and Lorentz groups in the next Section. This discussion will be key when we set up relativity into the categorical formalism.

	\section{Galilean and Lorentz group representation - a revision}
	\label{sec.3}    
	
	Our central idea when defining the representations of Galilean and Lorentz groups is to find transformations between frames, moving in relation to one another. The discussion we have conducted so far suggests that rotations and translations can be left out of our considerations once these two operations keep the frame inert.
	
	Firstly, in the same fashion the notion of space was built, we now do it for what is going to be called \textit{spacetime}. It is a set of \textit{events}, which in turn, are occurrences taking place in a small enough region of space that can be approximated by a point and happens so fast that its duration may be approximated by a instant. We denote the spacetime by $\mathcal{ST}$. 
	
	The following definition is also central.  
	\begin{definition}
		An \textit{observer} is a coordinate system in $\mathcal{ST}$.
	\end{definition}
	
	\begin{remark}
		Usually, the very definition of an observer is done through future-pointing timelike curves in the spacetime \cite{sachs2012general}. However, the differentiable structure of the spacetime will not be needed here. We only have to require the assumption - commonly accepted by the community - that spacetime is a manifold \cite{kopczynski1992spacetime}. This way, the definition of an observer is supposed to be interpreted as a choice of basis on the vector field obtained from pair of events. 
	\end{remark}
	
	The above remark mentions the concept of a vector space associated to the spacetime itself. One can in fact build the vector space using pair of events, with the structure of an affine space \cite{kopczynski1992spacetime}. The vector space will be used to define the representations of Galilean and Lorentz groups \cite{adb}. Let us discuss it in more details. Given a coordinate system $(T, \vec{X})$, the aforementioned vector space is given by elements of the form 
	\begin{equation}\label{3.1}
		(t, \vec{x}) \coloneqq (T(e_1) - T(e_0), \vec{x}(e_1) -\vec{x}(e_0)),
	\end{equation}
	where $e_0, e_1 \in \mathcal{ST}$. The synthetic notation $(t, \vec{x})$ is used to represent the difference of coordinates (temporal and spatial) of the pair of events. The sum and multiplication by numbers are defined in the usual way. Some observations are in order. 
	\begin{enumerate}[label=(\roman*)]
		\item It is usual to find in the literature the term (flat) ``spacetime'' for the vector space constructed above \cite{oneill}. To be more precise, there is a difference between the Minkowski spacetime (isometric to $(\mathds{R}^4, \eta)$, where $\eta$ is the standard pseudo-metric) to the Newtoninan one $\mathds{R}\times \mathds{R}^3$, see \cite{kopczynski1992spacetime, oneill} for details. This distinction will not be relevant to our discussion though.
		
		\item Throughout the text, we will abuse the notation to say that Galilean and Lorentz transformations connect frames of reference. We must not forget that these transformations are applied on the vector space constructed in \eqref{3.1}, from the fixed coordinates in $\mathcal{ST}$.		
	\end{enumerate}
	
	We still need one last concept, that of an inertial frame of reference. Here, we give a somewhat heuristic notion and systematically operational. Let us consider particles sufficiently far apart of other bodies, such that they do not interact with anything around them. We expect that these particles, which we will call \textit{free}, follow the simplest possible trajectory in a frame of reference - a point or a straight line. In physics, there exists an special interest in this case, as we know that the laws of classical mechanics are invariant with respect to changes between these frames of reference. With that, we have the following.
	\begin{definition}
		A frame of reference is called \textbf{inertial} when the trajectory of any free particle is a point or a straight line.
	\end{definition}
	
	Above all we are interested in frames of reference that move in relation to others, after all relative motion is exactly what differentiates them. We will always compare two frames of reference. Therefore, we need to assume some premises as a starting point.
	
	In the following premises, let $A, B, C$ be inertial frames of reference.
	
	\begin{premise}\label{pr1}
		There exist frames of reference.
	\end{premise}
	
	\begin{premise}\label{pr2}
		Given any two frames of reference $A, B$, there exists a unique velocity $\vec{v}$ such that $B$ moves with velocity $\vec{v}$ in relation to $A$.
	\end{premise}
	
	\begin{premise}\label{pr3}
		If $B$ moves with velocity $\vec{v}$ in relation to $A$, then we can affirm that $A$ moves with velocity $-\vec{v}$ in relation to $B$.
	\end{premise}
	
	\begin{premise}\label{pr4}
		If $B$ moves with velocity $\vec{v}$ in relation to $A$ and $C$ moves with velocity $\vec{u}$ in relation to $B$, then $C$ moves with velocity $\vec{u} + \vec{v}$ in relation to $A$. Actually, we have here two premises once the sum may represent both the classical and relativistic addition of velocities. Moreover, in the relativistic case, when the velocities are non-parallel, the composition involves a Thomas rotation. Yet, the angle associated with the rotation is defined by both $\Vec{v}$ and $\Vec{u}$. See the comment after Definition \ref{defGalLortrans}.
	\end{premise}
	
	In relation to the \textbf{Premise \ref{pr3}}, we need to emphasize that there is a difference between a frame of reference, let us say $C$, that moves with velocity $-\vec{v}$ in relation to $A$ and say that $A$ moves with velocity $-\vec{v}$ in relation to $B$. In this case, $C$ moves with velocity $(-\vec{v})+ (- \vec{v})$ in relation to $B$.
	
	\begin{definition}\label{defGalLortrans}
		Let $A$ be a frame of reference and $B$ be a frame of reference moving with velocity $\vec{v}$ in relation to $A$.
		\begin{enumerate}[label=(\roman*)]
			\item A \textbf{Galilean transformation} is a function $\Gamma_{\vec{v}}: A \rightarrow B$, defined by
			\begin{eqnarray*}
				\Gamma_{\vec{v}}(t,\vec{x}) = (t, \vec{x} - \vec{v}t).
			\end{eqnarray*}

            Formally, an arbitrary $\Gamma$ can also involve a rotation, which is not relevant for our discussion in the sense that the sectors of boosts and rotations do not mix with each other when one composes Galilean transformations.
			\item A \textbf{Lorentz transformation} is a function $\Lambda: A \rightarrow B$, which may be composed of rotations and boosts. For the particular case of a boost in the direction defined by  $\Vec{v}$ - the velocity connecting $A$ and $B$ - it reads 
			\begin{eqnarray*}
				\Lambda_{\vec{v}}(t, \vec{x}) = \left(\gamma_{\vec{v}}\left (t - \dfrac{\langle \vec{v}, \vec{x} \rangle}{c^2}\right ), \vec{x} + (\gamma_{\vec{v}} - 1) \frac{\langle\vec{v},\vec{x}\rangle}{v^2} \vec{v} - \gamma_{\vec{v}}\vec{v}t\right).
			\end{eqnarray*}
		\end{enumerate}
	\end{definition}
	In the equation above, $\langle \cdot, \cdot \rangle$ represents the scalar product in $\mathbb{R}^3$. Contrary to the first case above, two successive boosts may contain a rotation. We will return to this intriguing point in a while. For now, in the relativistic case, we have to notice the following. If $B$ moves with respect to $A$ with velocity $\Vec{v}$ and $C$ moves with respect to $B$ with velocity $\Vec{u}$, the connection between $A$ and $C$ involves not only a boost, but also a rotation. We point out that, although there is a rotation involved, the corresponding angle is uniquely determined by the relativistic composition of velocities. In this case the transformation from $A$ to $C$ is given by the expression
        \begin{align*}
            \Lambda = \Lambda_{\mathbcal{R}} \circ \Lambda_{\vec{u}\, \oplus_{\mathcal{S}} \, \vec{v}},
        \end{align*}
        where $\mathcal{R}$ is known as the Thomas rotation, described in full details in \cite{ungar1, ungar2, sexl}.

	\begin{remark}
		$\gamma_{\vec{v}}$ is a constant given by
		\begin{eqnarray*}
			\gamma_{\vec{v}} = \left( \sqrt{1 - \dfrac{|\vec{v}|^2}{c^2}} \right)^{-1}
		\end{eqnarray*}
		known as \textbf{Lorentz factor}. In the equation above $c$ is the speed of light.
	\end{remark}
	
	When dealing with Lorentz transformations, we are working in a relativistic context. We will not go in detail about the origins and properties of physical systems in this context, with exception of the fact that velocities such that $|\vec{v}| > c$ are not permitted. Therefore, we need to separate Premise \ref{pr2} in two cases.

	\begin{premise*3.1}[Classic]
		Given any two frames of reference $A, B$, there is a unique velocity $\vec{v}$, where $|\vec{v}| \in [0, +\infty)$, such that $B$ moves with velocity $\vec{v}$ in relation to $A$ and a unique Galilean transformation associated with $\vec{v}$.
	\end{premise*3.1}
	
	\begin{premise*3.2}[Relativistic]
		\label{premise3.2.2}
		Given any two inertial frames of reference $A, B$, there is a unique velocity $\vec{v}$, where $|\vec{v}| \in [0, c)$, such that $B$ moves with velocity $\vec{v}$ in relation to $A$ and a unique Lorentz transformation associated with $\vec{v}$. There are instances where this velocity may define a rotation, as explained above.
	\end{premise*3.2}
	
	We still need an observation. As in relativistic contexts the allowed speeds are limited by $c$, we cannot use the same sum of velocities as in the classical context. Usually, it is calculated in terms of the associated Lie Algebra, that we will omit here, as we are interested in the group structure as well as aiming a bird's eye view of relativity via category theory. A thorough discussion, with a closed formula for the addition of velocities in the relativistic regime can be seen in \cite{sexl} though. To emphasize this difference, some authors use different symbols for the relativistic and classical velocity sums. We will follow this prescription, to avoid confusion. The classical addition of velocities will be denoted by $\oplus_{\mathcal{C}}$ as well as $\oplus_{\mathcal{S}}$ stands to the (special) relativistic instance. Finally, we observe that Premise \ref{pr4} actually represents two distinct axioms, one for the relativistic and the other to the classical case.
	
	\begin{prop}\label{properties}
		With these definitions, the Galilean transformations $(\Gamma)$ have the following properties: 
		\begin{enumerate}[label = ($\Gamma.$\roman*)]
			\item linearity;
			
			\item $\Gamma_{\vec{0}} = \mathrm{id}_{A}$;
			
			\item $\Gamma_{\vec{u}} \circ \Gamma_{\vec{v}} = \Gamma_{\vec{u}\, \oplus_{\mathcal{C}} \, \vec{v}}$;
			
			\item $(\Gamma_{\vec{v}})^{-1} = \Gamma_{-\vec{v}}$.			
		\end{enumerate}
        \end{prop}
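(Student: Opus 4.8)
The plan is to verify the four properties by direct computation from the explicit formula $\Gamma_{\vec v}(t,\vec x) = (t, \vec x - \vec v t)$ given in Definition \ref{defGalLortrans}, treating each item in turn. First I would establish linearity: given two spacetime vectors $(t_1,\vec x_1)$ and $(t_2,\vec x_2)$ and scalars $\lambda,\mu$, one computes $\Gamma_{\vec v}(\lambda(t_1,\vec x_1) + \mu(t_2,\vec x_2)) = \Gamma_{\vec v}(\lambda t_1 + \mu t_2, \lambda \vec x_1 + \mu \vec x_2) = (\lambda t_1 + \mu t_2, \lambda \vec x_1 + \mu \vec x_2 - \vec v(\lambda t_1 + \mu t_2))$, and then checks this equals $\lambda \Gamma_{\vec v}(t_1,\vec x_1) + \mu \Gamma_{\vec v}(t_2,\vec x_2)$ by distributing; the only fact used is bilinearity of scalar multiplication of $\vec v$ against the time component, which holds in the vector space of \eqref{3.1}.

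Next I would handle (ii): substituting $\vec v = \vec 0$ gives $\Gamma_{\vec 0}(t,\vec x) = (t, \vec x - \vec 0 \cdot t) = (t,\vec x)$, which is exactly $\mathrm{id}_A$. For (iii), I would compose: $(\Gamma_{\vec u} \circ \Gamma_{\vec v})(t,\vec x) = \Gamma_{\vec u}(t, \vec x - \vec v t) = (t, (\vec x - \vec v t) - \vec u t) = (t, \vec x - (\vec u + \vec v)t)$, and since the classical composition of velocities is just vector addition, $\vec u \oplus_{\mathcal C} \vec v = \vec u + \vec v$, this equals $\Gamma_{\vec u \oplus_{\mathcal C} \vec v}(t,\vec x)$, invoking Premise \ref{pr4} (classical case) to identify the composed velocity. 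Here one should be slightly careful that the codomains match: if $B$ moves with $\vec v$ relative to $A$ and $C$ moves with $\vec u$ relative to $B$, then Premise \ref{pr4} guarantees $C$ moves with $\vec u + \vec v$ relative to $A$, so $\Gamma_{\vec u \oplus_{\mathcal C}\vec v}\colon A \to C$ is well-defined and the equality of functions $A \to C$ makes sense. Finally, (iv) follows from (ii) and (iii): $\Gamma_{\vec v} \circ \Gamma_{-\vec v} = \Gamma_{\vec v \oplus_{\mathcal C}(-\vec v)} = \Gamma_{\vec 0} = \mathrm{id}$, and symmetrically $\Gamma_{-\vec v} \circ \Gamma_{\vec v} = \mathrm{id}$, using Premise \ref{pr3} to see that $-\vec v$ is indeed the velocity of $A$ relative to $B$ so that $\Gamma_{-\vec v}$ is the map in the correct direction.

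None of the steps involves a serious obstacle — the computations are elementary linear algebra. The only point requiring any care, which I would flag explicitly, is bookkeeping of the domains and codomains: each $\Gamma_{\vec v}$ is a map between two specific frames, so properties (iii) and (iv) are statements about composites of maps with matching endpoints, and the identifications rely on Premises \ref{pr2}, \ref{pr3}, \ref{pr4} to guarantee that the relevant velocities exist, are unique, and compose additively in the classical regime. Once that is acknowledged, the verification is a one-line calculation for each property.
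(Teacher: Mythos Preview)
Your proposal is correct and is precisely the elementary direct computation from the defining formula that the paper has in mind. In fact the paper does not write out a proof at all: it simply states that the results are straightforward and refers to \cite{adb}, so your explicit verification (including the care you take with domains/codomains via Premises \ref{pr2}--\ref{pr4}) is, if anything, more detailed than what the paper provides.
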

        \begin{proof}
            The proof for these results are straightforward and can be found in many places such as \cite{adb}.
        \end{proof}

        The relativistic case is more intricate. As noted above, the composition of two Lorentz boosts in arbitrary directions can be decomposed in a rotation, together with the boost parametrized by the special composition of velocities. 

        \begin{prop}
		The Lorentz transformations $(\Lambda)$ obey the similar properties as above, 
		\begin{enumerate}[label = ($\Lambda.$\roman*)]
			\item linearity;
			
			\item $\Lambda_{\vec{0}} = \mathrm{id}_{A}$;
			
			\item $\Lambda_{\vec{u}} \circ \Lambda_{\vec{v}} = \Lambda_{\mathbcal{R}} \circ \Lambda_{\vec{u}\, \oplus_{\mathcal{S}} \, \vec{v}}$;
			
			\item $(\Lambda_{\vec{v}})^{-1} = \Lambda_{-\vec{v}}$.			
		\end{enumerate}
	\end{prop}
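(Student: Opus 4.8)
The plan is to verify each of the four properties $(\Lambda.\mathrm{i})$--$(\Lambda.\mathrm{iv})$ directly from the explicit boost formula in Definition \ref{defGalLortrans}, reducing the genuinely relativistic content of $(\Lambda.\mathrm{iii})$ to a known fact about the Lorentz group. First I would establish $(\Lambda.\mathrm{i})$: the map $\Lambda_{\vec v}(t,\vec x)$ given in Definition \ref{defGalLortrans} is manifestly a sum of terms each linear in the pair $(t,\vec x)$, since $\langle \vec v,\cdot\rangle$ is linear and $\gamma_{\vec v}$, $v^2$, $\vec v$ are constants once $\vec v$ is fixed; so additivity and homogeneity are immediate. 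For $(\Lambda.\mathrm{ii})$, I would set $\vec v=\vec 0$, note $\gamma_{\vec 0}=1$, and observe that every $\vec v$-dependent term vanishes (the term $(\gamma_{\vec v}-1)\langle\vec v,\vec x\rangle\vec v/v^2$ is handled by taking the limit $\vec v\to\vec 0$, where it is $O(|\vec v|^2)$, or simply by noting it is absent when there is no boost), leaving $\Lambda_{\vec 0}(t,\vec x)=(t,\vec x)=\mathrm{id}_A$.

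For $(\Lambda.\mathrm{iv})$, I would argue that the boost along $\vec v$ with parameter $-\vec v$ is the inverse: one way is to compute $\Lambda_{-\vec v}\circ\Lambda_{\vec v}$ directly, using $\gamma_{-\vec v}=\gamma_{\vec v}$ and $|-\vec v|=|\vec v|$, and check the spatial and temporal components collapse to the identity; a cleaner route is to invoke Premise \ref{pr3} together with the uniqueness clause of Premise 3.2.2, since $A$ moves with velocity $-\vec v$ relative to $B$ and a pure boost connecting them (no Thomas rotation arises for a single pair of frames) must be $\Lambda_{-\vec v}$, while the group-inverse of $\Lambda_{\vec v}$ also connects $B$ back to $A$; uniqueness forces them to coincide. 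I would present the direct computation as the primary argument and mention the conceptual one as a remark.

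The main obstacle, and the only step with real content, is $(\Lambda.\mathrm{iii})$: showing that the composition of two pure boosts $\Lambda_{\vec u}\circ\Lambda_{\vec v}$ equals $\Lambda_{\mathbcal R}\circ\Lambda_{\vec u\,\oplus_{\mathcal S}\,\vec v}$, where $\mathbcal R$ is the Thomas rotation and $\oplus_{\mathcal S}$ is the relativistic velocity addition. Here I would not reproduce the full matrix computation; instead I would cite the standard polar-decomposition fact for the Lorentz group --- that any element of the identity component factors uniquely as a rotation times a boost --- and identify the boost factor's rapidity/velocity with $\vec u\oplus_{\mathcal S}\vec v$ and the rotation factor with the Thomas--Wigner rotation, referring to \cite{ungar1, ungar2, sexl} for the closed-form expressions. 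The key points to make explicit are (a) that the product of two elements of the restricted Lorentz group is again in that group, so such a decomposition exists, (b) that when $\vec u\parallel\vec v$ the rotation is trivial and this reduces to the familiar rapidity-addition law, and (c) that the angle of $\mathbcal R$ is a function of $\vec u$ and $\vec v$ alone, as already flagged after Definition \ref{defGalLortrans}, so the right-hand side is well-defined. With $(\Lambda.\mathrm{iii})$ granted in this form, the remaining properties are routine, and the proof is complete.
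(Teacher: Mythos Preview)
Your proposal is correct and largely mirrors the paper's own proof: items $(\Lambda.\mathrm{i})$ and $(\Lambda.\mathrm{ii})$ are dispatched the same way, and for $(\Lambda.\mathrm{iii})$ both you and the paper decline to reproduce the full computation and instead refer to the literature \cite{ungar1, ungar2}. Your framing via the polar (rotation--boost) decomposition of the restricted Lorentz group is a slightly more structural way of saying the same thing.

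The one genuine difference is in the logical ordering of $(\Lambda.\mathrm{iv})$. The paper derives $(\Lambda.\mathrm{iv})$ \emph{from} $(\Lambda.\mathrm{iii})$: it writes $\Lambda_{\vec v}\circ\Lambda_{-\vec v}=\Lambda_{\mathbcal R}\circ\Lambda_{\vec v\,\oplus_{\mathcal S}\,(-\vec v)}$, then observes that $\vec v\oplus_{\mathcal S}(-\vec v)=\vec 0$ and that the Thomas rotation for a velocity and its negative is trivial, so the right-hand side collapses to the identity. You instead propose a direct computation of $\Lambda_{-\vec v}\circ\Lambda_{\vec v}$ (or the uniqueness argument via Premises \ref{pr3} and 3.2.2), making $(\Lambda.\mathrm{iv})$ logically independent of the harder item $(\Lambda.\mathrm{iii})$. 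Your route is arguably cleaner and self-contained; the paper's route has the minor advantage of exhibiting $(\Lambda.\mathrm{iv})$ as the degenerate case of the composition law and of making explicit that the Thomas rotation vanishes in the collinear situation.
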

 
        \begin{proof}
            Items $(i)$ and $(ii)$ of this proposition have the same proof as the classical case. However, the other two need a more in depth explanation. As stated above, the composition of velocities in the relativistic case needs a ``correction'', given by what is known as a Thomas rotation. As this calculation is extensive, we will omit it here, but we refer the interested reader to \cite{ungar1, ungar2}, proving item \textit{(iii)}.

            Now, we proceed to item $(iv)$. One needs to show that the composition of $\Lambda_{\vec{v}}$ with $\Lambda_{-\vec{v}}$ is the identity. But, by item $(iii)$ we have
            \begin{align*}
                \Lambda_{\vec{v}} \circ \Lambda_{-\vec{v}} = \Lambda_{\mathcal{R}} \circ \Lambda_{\vec{v} \oplus_{\mathcal{S}} (-\vec{v})}.
            \end{align*}
            As the relativistic sum $\vec{v} \oplus_{\mathcal{S}} (-\vec{v}) = \Vec{0}$, item $(i)$ guarantees that the last boost is the identity. It is also simple to check that the Thomas rotation for $\Vec{v}$ and $-\vec{v}$ is the identity, proving that $\Lambda_{-\vec{v}} = (\Lambda_{\Vec{v}})^{-1}$,as desired.
        \end{proof}
	
	Note that, property $(ii)$ means that we can interpret $A$ moving with velocity $\vec{0}$ in relation to $A$, which is consistent with premise \ref{pr2}. Also, properties $(iii)$ and $(iv)$ are consistent with premises \ref{pr3} and \ref{pr4}.
	
	As a consequence of item $(iv)$ of Proposition \ref{properties}, we have the following.
	
	\begin{corolario}
		Lorentz and Galilean transformations are linear isomorphisms.
	\end{corolario}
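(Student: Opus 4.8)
The plan is to read the statement straight off the two preceding propositions, supplemented by one elementary linear-algebra fact. First I would invoke items $(i)$ of Proposition \ref{properties} and of its Lorentz analogue: these already assert that each $\Gamma_{\vec v}$ and each $\Lambda_{\vec v}$ is a linear map between the vector spaces of pairs of events associated with the two frames $A$ and $B$. Hence the only thing left to establish is bijectivity, together with linearity of the inverse.

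For bijectivity I would use items $(iv)$. The map $\Gamma_{-\vec v}\colon B\to A$ is again a Galilean transformation, and combining $(iii)$ with $(ii)$ gives $\Gamma_{-\vec v}\circ\Gamma_{\vec v}=\Gamma_{(-\vec v)\,\oplus_{\mathcal C}\,\vec v}=\Gamma_{\vec 0}=\mathrm{id}_A$ and, symmetrically, $\Gamma_{\vec v}\circ\Gamma_{-\vec v}=\mathrm{id}_B$; so $\Gamma_{\vec v}$ is a bijection whose inverse $\Gamma_{-\vec v}$ is, by $(i)$, linear. The relativistic case runs verbatim, now relying on item $(iv)$ of the Lorentz proposition (already proved above, where the Thomas-rotation subtlety was dispatched): $\Lambda_{\vec v}$ is a bijection with linear inverse $\Lambda_{-\vec v}$. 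For a fully general transformation carrying also a rotation sector, as permitted by Definition \ref{defGalLortrans}, I would add that a spatial rotation is an orthogonal, hence invertible, linear map, and that a composite of linear isomorphisms is itself a linear isomorphism; the general $\Gamma$ and $\Lambda$ then inherit the property from their boost and rotation factors.

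I do not expect a genuine obstacle: the content is entirely furnished by the two propositions. The single point worth a sentence is why the inverse is linear — which is immediate here because the inverse is itself a transformation of the same kind, hence linear by items $(i)$; alternatively one may simply cite the standard fact that the set-theoretic inverse of a bijective linear map between vector spaces is automatically linear.
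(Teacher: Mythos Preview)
Your proposal is correct and matches the paper's approach: the paper does not give a standalone proof but simply records the corollary as ``a consequence of item $(iv)$ of Proposition \ref{properties}'', and your argument is precisely the unpacking of that remark---linearity from items $(i)$, bijectivity with linear inverse from items $(iv)$ (via $(ii)$ and $(iii)$). The additional sentence you offer about the rotation sector is harmless extra detail, not a departure in method.
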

	
	It is important to note that Premise \ref{pr2} means that between both inertial frames of reference, there is a unique Lorentz or Galilean transformation between them.
	
	As a last result, we have.
	
	\begin{prop}\label{propuni}
		Given two inertial frames of reference $A$ and $B$, there is a unique Galilean (or Lorentz) transformation between $A$ and $B$.
	\end{prop}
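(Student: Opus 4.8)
I expect both halves of the statement to follow almost immediately from the premises of Section~\ref{sec.3}, using Proposition~\ref{properties} (and its Lorentzian analogue) only to compose transformations; no substantial computation should be needed.

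For existence, I would invoke Premise~3.2.1 in the classical regime, or Premise~3.2.2 in the relativistic one: since $A$ and $B$ are inertial, these supply a velocity $\vec v$ --- with $|\vec v|\in[0,+\infty)$ or $|\vec v|\in[0,c)$ respectively --- such that $B$ moves with velocity $\vec v$ relative to $A$. Plugging $\vec v$ into Definition~\ref{defGalLortrans} then yields a map $\Gamma_{\vec v}\colon A\to B$ (resp. $\Lambda_{\vec v}\colon A\to B$), which by the Corollary above is a linear isomorphism; this is the required transformation.

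For uniqueness, the quickest route I see is this: any Galilean transformation $A\to B$ has the form $\Gamma_{\vec u}$ for some velocity $\vec u$, and by Definition~\ref{defGalLortrans} this $\Gamma_{\vec u}$ sends $A$ to the frame moving with velocity $\vec u$ relative to $A$; if that frame is $B$, then $\vec u$ witnesses the motion of $B$ relative to $A$, so by the uniqueness clause of Premise~3.2.1 it must equal the $\vec v$ above, and the ``unique Galilean transformation associated with $\vec v$'' clause of the same premise then forces $\Gamma_{\vec u}=\Gamma_{\vec v}$. A more categorical variant, which I might prefer to present, is to take two transformations $\Gamma_{\vec v},\Gamma_{\vec w}\colon A\to B$, form $\Gamma_{\vec w}^{-1}\circ\Gamma_{\vec v}\colon A\to A$, rewrite it as $\Gamma_{\vec v\,\oplus_{\mathcal C}\,(-\vec w)}$ via properties $(iii)$ and $(iv)$ of Proposition~\ref{properties}, and conclude from the uniqueness of velocities --- its target being $A$ itself --- that $\vec v\,\oplus_{\mathcal C}\,(-\vec w)=\vec 0$, i.e. $\vec v=\vec w$. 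In the relativistic case the same move gives $\Lambda_{\vec w}^{-1}\circ\Lambda_{\vec v}=\Lambda_{\mathcal R}\circ\Lambda_{\vec v\,\oplus_{\mathcal S}\,(-\vec w)}$; since the boost sector alone decides which frame is reached from $A$, uniqueness of velocities again yields $\vec v\,\oplus_{\mathcal S}\,(-\vec w)=\vec 0$, hence $\vec v=\vec w$, and --- exactly as in the proof of the Lorentz proposition --- the Thomas rotation for $\vec v$ and $-\vec v$ is trivial, so $\Lambda_{\vec v}=\Lambda_{\vec w}$.

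The step I expect to be the real obstacle is the bookkeeping of rotations: a general Galilean or Lorentz transformation between frames may carry a rotational sector that is invisible at the level of frames (it only relabels coordinates), so before declaring a transformation $A\to A$ to be the identity one must check that this sector decouples. In the Galilean case it does so cleanly, as recorded in Proposition~\ref{properties}; in the Lorentz case one has to lean on the fact --- already used in the proof of the Lorentz proposition --- that the Thomas rotation is entirely determined by the composed velocity and degenerates exactly when that velocity does. Once this is granted, the statement is just an unwinding of premises that were tailored precisely so that the connecting velocity, and with it the connecting transformation, is unique.
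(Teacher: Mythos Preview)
Your proposal is correct and follows essentially the same route as the paper, which simply invokes Premise~\ref{pr2} (in its refined forms 3.2.1/3.2.2) to obtain both the unique velocity $\vec v$ and the unique associated transformation $\Gamma_{\vec v}$ or $\Lambda_{\vec v}$ in one stroke. Your categorical variant via $\Gamma_{\vec w}^{-1}\circ\Gamma_{\vec v}$ and the discussion of rotational bookkeeping are sound but superfluous here, since the premises were phrased precisely to short-circuit those concerns.
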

	\begin{proof}[\unskip\nopunct]
		\textbf{Proof:}
		By \textbf{Premise} \ref{pr2}, we know there is a unique velocity $\vec{v}$ (with $|\vec{v}| \in [0, c)$ if necessary), such that $B$ moves with velocity $\vec{v}$ in relation to $A$. Therefore, there exist transformations
		\begin{eqnarray*}
			\Gamma_{\vec{v}}: A \rightarrow B \; \mbox{ and } \; \Lambda_{\vec{v}}: A \rightarrow B.
		\end{eqnarray*}
		As each velocity is unique, we also have the uniqueness of the transformations.
	\end{proof}
	
	\section{Categories \texorpdfstring{$\mathbcal{Lor}$}{Lor} and \texorpdfstring{$\mathbcal{Gal}$}{Gal}}
	\label{sec.4}
	
	We now can make use of the framework above in the categorical formalism. We begin with the definitions below.
	
	\begin{definition}
		We define the category $\mathbcal{Gal}$ as the category which objects are inertial frames of reference and which morphisms are Galilean transformations.
	\end{definition}
	
	\begin{definition}
		We define the category $\mathbcal{Lor}$ as the category which objects are inertial frames of reference and which morphisms are Lorentz transformations.
	\end{definition}
	
	Note that, $\mathbcal{Gal}$ and $\mathbcal{Lor}$ are in fact categories by Proposition \ref{properties}, that is, we know that for each inertial frame of reference, there exists an identity (namely, $\Gamma_{\vec{0}}$ and $\Lambda_{\vec{0}}$) and, for frames of reference, $A, B, C$ and morphisms $\Gamma_{\vec{v}}, \Lambda_{\vec{v}}: A \rightarrow B$ and $\Gamma_{\vec{u}}, \Lambda_{\vec{u}}: B \rightarrow C$, there are compositions
	\begin{eqnarray*}
		\Gamma_{\vec{u}} \circ \Gamma_{\vec{v}} = \Gamma_{\vec{u} \, \oplus_{\mathcal{C}} \, \vec{v}}: A \rightarrow C
	\end{eqnarray*}
	and
	\begin{eqnarray*}
		\Lambda_{\vec{u}} \circ \Lambda_{\vec{v}} = \Lambda_{\mathbcal{R}}\Lambda_{\vec{u} \, \oplus_{\mathcal{S}} \, \vec{v}}: A \rightarrow C.
	\end{eqnarray*}
	
	Note also that by Proposition \ref{propuni}, each inertial frame of reference is a zero object in the categories $\mathbcal{Gal}$ and $\mathbcal{Lor}$, that is, are initial and terminal objects. We can also show a even stronger result.
	
	\begin{prop}
		$\mathbcal{Gal}$ and $\mathbcal{Lor}$ are complete categories.
	\end{prop}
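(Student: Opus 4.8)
The plan is to exploit the very strong structural fact already established: every object of $\mathbcal{Gal}$ (resp. $\mathbcal{Lor}$) is a zero object, and between any two objects there is exactly one morphism. In other words, both categories are equivalent to the terminal category $\mathbf{1}$ (a single object with only its identity), up to the fact that they may have a proper class of objects all of which are canonically isomorphic. A category with this property is automatically complete, and I would make that precise rather than verifying the limit condition diagram-by-diagram. Recall the standard criterion: a category $\mathcal{C}$ is complete iff it has all small products and all equalizers (equivalently, all small limits). So it suffices to produce, for an arbitrary small diagram $D\colon \mathcal{J}\to \mathbcal{Gal}$, a limiting cone.

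First I would fix any object $F$ (Premise \ref{pr1} guarantees one exists) and claim that $F$, together with the family of unique morphisms $\lambda_j\colon F\to D(j)$, is a limit cone over $D$. The cone condition is automatic: for any arrow $\alpha\colon j\to j'$ in $\mathcal{J}$, the composite $D(\alpha)\circ\lambda_j$ and $\lambda_{j'}$ are both morphisms $F\to D(j')$, hence equal by uniqueness (Proposition \ref{propuni}). For the universal property, let $(C,\mu_j\colon C\to D(j))$ be any other cone; I must exhibit a unique mediating morphism $h\colon C\to F$ with $\lambda_j\circ h=\mu_j$ for all $j$. Existence and uniqueness of $h$ as a morphism are immediate from Proposition \ref{propuni} (there is exactly one morphism $C\to F$), and the triangle $\lambda_j\circ h=\mu_j$ then holds again by uniqueness of morphisms $C\to D(j)$. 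Thus the cone is limiting, and since $\mathcal{J}$ was an arbitrary small category, $\mathbcal{Gal}$ has all small limits. The identical argument, using the $\Lambda$-versions of the propositions, handles $\mathbcal{Lor}$; the presence of Thomas rotations in the composition law $\Lambda_{\vec u}\circ\Lambda_{\vec v}=\Lambda_{\mathbcal R}\Lambda_{\vec u\oplus_{\mathcal S}\vec v}$ is irrelevant here, since all that is used is that hom-sets are singletons.

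I would also remark that dually every small colimit exists, so the categories are in fact bicomplete, and that each limit (or colimit) of a diagram is again an inertial frame, canonically isomorphic to every object in the diagram — this is the categorical incarnation of ``no privileged frame'' that the paper is after. The only genuine subtlety worth flagging is size: the object class of $\mathbcal{Gal}$ and $\mathbcal{Lor}$ is a proper class, so one should be slightly careful that ``complete'' is meant in the usual sense of having limits of all \emph{small} diagrams, which is exactly what the argument above delivers; no smallness of the ambient category is needed. I expect no real obstacle — the main point is simply to recognize that Proposition \ref{propuni} (``exactly one morphism between any two objects'') collapses the completeness question to a triviality, so the ``proof'' is really just the observation that a category which is equivalent to $\mathbf{1}$ on each connected component, with all components isomorphic, is complete.
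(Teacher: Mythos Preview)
Your proposal is correct and follows essentially the same approach as the paper: fix an arbitrary object $F$, use Proposition~\ref{propuni} (uniqueness of morphisms between any two frames) to see that the unique maps $F\to D(j)$ form a cone, and use the same uniqueness again to obtain the mediating morphism and the commuting triangles. The paper's proof is identical in substance; your additional remarks on bicompleteness, equivalence to $\mathbf{1}$, and size are sound embellishments but not a different route.
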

	\begin{proof}[\unskip\nopunct]
		\textbf{Proof:}
		Let $\boldsymbol{I}$ be a small category and $D: \boldsymbol{I} \rightarrow \mathbcal{Gal}$ a diagram. Given any object $A \in \mathbcal{Gal}$, for all $I,J \in \boldsymbol{I}$ we have the commutative diagram
		\begin{eqnarray*}
			\begin{tikzcd}
				& & D(I) \arrow[]{dd}{D(f)}
				\\
				A \arrow[]{urr}{\Gamma_{\vec{v}}} \arrow[swap]{drr}{\Gamma_{\vec{u}}} & &
				\\
				& & D(J)
			\end{tikzcd}
		\end{eqnarray*}
		where $D(I)$ moves with velocity $\vec{v}$ in relation to $A$ and $D(J)$ with velocity $\vec{u}$ in relation to $A$, by the uniqueness of Galilean transformations. Then, every frame of reference defines a cone in the category. Moreover, if we have another cone, $(B \stackrel{\Gamma_{\vec{v}_I}}{\rightarrow} D(I))_{I \in \boldsymbol{I}}$ in the category, we know, by Proposition \ref{propuni}, that there exists a unique Galilean transformation $\Gamma_{\vec{w}}: B \rightarrow A$. Thus, the diagram
		\begin{eqnarray*}
			\begin{tikzcd}
				& & & D(I) \arrow[]{dd}{D(f)}
				\\
				B \arrow[dashed, pos=0.8]{rr}{\Gamma_{\vec{w}}} \arrow[]{urrr}{\Gamma_{\vec{v}_I}} \arrow[swap]{drrr}{\Gamma_{\vec{u}_J}} & & A \arrow[swap]{ur}{\Gamma_{\vec{v}}} \arrow[]{dr}{\Gamma_{\vec{u}}} &
				\\
				& & & D(J)
			\end{tikzcd}
		\end{eqnarray*}
		commutes, for every $I,J \in \boldsymbol{I}$ and $f: I \rightarrow J$, again by Propostion \ref{propuni}. Hence $A$ is a limit of the diagram $D: \boldsymbol{I} \rightarrow \mathbcal{Gal}$. As $\boldsymbol{I}$ is arbitrary, we conclude that $\mathbcal{Gal}$ is complete. In other words, $\mathbcal{Gal}$ has limits of shape $\boldsymbol{I}$, for every small category $\boldsymbol{I}$. Analogously, it is possible to show that $\mathbcal{Lor}$ is complete. For that, the Thomas rotation must be taken into account. In particular, changing $\Gamma \leftrightarrow \Lambda$ in the first diagram above, the commutativity is guaranteed by the uniqueness of Lorentz transformation as in \textbf{Premise \ref{premise3.2.2}}, namely both $D(f)$ and $\Lambda_{\mathbcal{R}} \Lambda_{\vec{u} \, \oplus_{\mathcal{S}} (-\Vec{v})}$  are maps from $D(I)$ to $D(J)$ and therefore they coincide assuring that the diagram commutes. In the second one, we also change $\Gamma \leftrightarrow \Lambda$. To show it commutes, we note that $\Lambda_{\Vec{w}}$, $\Lambda_{\mathbcal{R}} \Lambda_{(-\vec{v}) \, \oplus_{\mathcal{S}} (\Vec{v}_I)}$ and $\Lambda_{\mathbcal{R}} \Lambda_{(-\vec{u}) \, \oplus_{\mathcal{S}} (\Vec{u}_J)}$ are maps from $B$ to $A$. Then uniqueness guarantees they are all the same, assuring that the diagram commutes.
	\end{proof}
	
	\begin{corolario}
		Any inertial frame of reference is a small limit in the categories $\mathbcal{Gal}$ and $\mathbcal{Lor}$.
	\end{corolario}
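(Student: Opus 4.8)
The plan is to read this off directly from the completeness proof just given. Recall that in that argument we did not merely show that limits exist; we showed the stronger fact that \emph{every} object $A$ of $\mathbcal{Gal}$ (resp. $\mathbcal{Lor}$), equipped with the cone whose legs $\Gamma_{\vec{v}_I}\colon A\to D(I)$ (resp. $\Lambda_{\vec{v}_I}$) are the unique morphisms furnished by Proposition~\ref{propuni}, is a limiting cone over any small diagram $D\colon\boldsymbol{I}\to\mathbcal{Gal}$. So to certify that a prescribed inertial frame $A$ \emph{is} a small limit, it suffices to produce a single small diagram of which it is the limit, and then invoke that proposition.

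First I would fix the index category $\boldsymbol{I}$ to be the terminal category $\mathbf{1}$ (one object, only the identity morphism), which is certainly small, and let $D\colon\mathbf{1}\to\mathbcal{Gal}$ be the functor picking out $A$. A cone over $D$ with apex $B$ is then nothing but a morphism $B\to A$, and by Proposition~\ref{propuni} there is exactly one such, namely $\Gamma_{\vec{w}}$ (resp. $\Lambda_{\vec{w}}$) for the unique velocity $\vec{w}$ with which $A$ moves in relation to $B$; this morphism is automatically the required factorization through the apex $A$, and it is unique. Hence $A=\lim D$, so $A$ is a small limit. The same choice of $\boldsymbol{I}$ and $D$ works verbatim in $\mathbcal{Lor}$ once the Thomas rotation is inserted as in the completeness proof, the uniqueness clause of \textbf{Premise}~\ref{premise3.2.2} again forcing any competing morphisms $B\to A$ to coincide.

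I do not expect a genuine obstacle here: the statement is a formal consequence of the preceding proposition together with the observation that the limiting object in its proof was arbitrary. The only point deserving care is purely bookkeeping, namely making sure the cone legs and the comparison morphism are the \emph{unique} Galilean (resp. Lorentz) transformations guaranteed by Premise~\ref{pr2}, so that the universal property holds on the nose. One could equally take $\boldsymbol{I}$ to be the empty category, in which case the claim reduces to the already-noted fact that every inertial frame is a terminal (indeed zero) object; either route exhibits $A$ as a small limit, and this is what I would record as the proof.
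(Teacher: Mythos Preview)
Your proposal is correct and follows essentially the same route as the paper: both arguments observe that in the completeness proof the apex $A$ was arbitrary, so every object already serves as a limit of any small diagram. You are simply more explicit than the paper in naming a particular index category ($\mathbf{1}$ or the empty category) to witness $A$ as a small limit, whereas the paper leaves this implicit by pointing back to the interchangeability of cone vertices in the preceding proof.
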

	\begin{proof}[\unskip\nopunct]
		\textbf{Proof:}
		Following the proof of the previous Proposition, any object in $\mathbcal{Gal}$ and $\mathbcal{Lor}$ are limits of the diagram, as the role of the vertices of any cone is interchangeable.
	\end{proof}
	
	The most interesting fact in this corollary is that it implies something expected, that is, there are no privileged inertial frames of reference. If given a diagram there existed some objects that are limits and some that are not, that would mean there is a difference in how some frames of reference perceive the others. We should remember, as mentioned before, that in order to say something about inertial frames of reference, we must always compare them, that is, if we consider frames of reference $A$, $B$ and $C$, there cannot be any difference in our description of $B$ and $C$ in relation to $A$ or of $B$ and $C$ in relation to $B$. Therefore, the corollary above seems to be consistent with the principle of relativity, that says the laws of physics must be the same in any inertial frame of reference. In fact, if there is no privileged frame, one may use anyone of them to describe the mathematical laws governing physics processes. 
	
	Another remarkable point that should be considered is that groups are not complete categories. Indeed groups do not have initial or terminal objects (with exception of the trivial group). The reason for this is that groups usually have more than one map $G \rightarrow G$, where $G$ is the single object in the group seen as a category. However, with our approach we obtain a complete category.
	
	Note that the Lorentz and Galilean groups are in fact groups in the algebraic sense, which could have been defined as categories $\mathbcal{Lor}$ and $\mathbcal{Gal}$ with a single object. The reason we haven't defined our categories this way is that we believe each inertial frame of reference gives birth to a distinct space. One way of seeing this is the fact that the set of accessible events for each frame of reference is different. Moreover, the construction of the physical space we did in previous works seems to also point in this direction \cite{blc2020, ldb2019}.
	
	\section{The Limit Functor}
	\label{sec.5}
	
	Because of the similarities between the categories $\mathbcal{Gal}$ and $\mathbcal{Lor}$, it is natural to ask if there is a way to connect them. In order to do so, we take inspiration from a phenomenon well known to physics, namely, that in the limit $c \to +\infty$, we recover classical formulae from the relativistic ones. 
	
	More precisely, given a velocity $\vec{v}$ with $|\vec{v}| < c$, the Lorentz factor is given by
	\begin{equation}
		\gamma_v = \dfrac{1}{\sqrt{1 - \dfrac{|\vec{v}|^2}{c^2}}}.
	\end{equation}
	Hence, applying the limit $c \to +\infty$, we obtain $\gamma_{\vec{v}} \to 1$.
	
	As Lorentz transformations are linear, in particular, they are continuous, that is, 
	\begin{align}
		\lim\limits_{c \to +\infty} \Lambda_{\vec{v}}(t, \vec{x}) &= \lim\limits_{c \to +\infty}\left(\gamma_{\vec{v}}\left (t - \dfrac{\langle \vec{v}, \vec{x} \rangle}{c^2}\right ), \vec{x} + (\gamma_{\vec{v}} - 1) \frac{\langle\vec{v},\vec{x}\rangle}{v^2} \vec{v} - \gamma_{\vec{v}}\vec{v}t\right) \nonumber \\ & = \Gamma_{\vec{v}}(t, \vec{x}),
	\end{align}
	where we are considering the particular case of a boost. However, both the relativistic addition of velocities and the Thomas rotation are continuously deformed into the classical addition of velocities as well as the identity mapping in the $c \to +\infty$ limit \cite{ungar1}. From this, we propose the following.
	
	\begin{definition}
		The \textbf{limit functor} $L: \mathbcal{Lor} \rightarrow \mathbcal{Gal}$ defined by:
		\begin{enumerate}[label=(\roman*)]
			\item for every object $A \in \mathbcal{Lor}$ $L(A) = A$;
			
			\item for each pair of objects $A,B \in \mathbcal{Lor}$ and morphisms $\Lambda_{\vec{v}}: A \rightarrow B$, $L(\Lambda_{\vec{v}}) = \Gamma_{\vec{v}}$.
		\end{enumerate}
	\end{definition}
	
	To show that $L$ defines indeed a functor, we need to show that compositions and identities are preserved by $L$. Thus, let us consider $A, B, C \in \mathbcal{Lor}$, such that $B$ moves with velocity $\vec{v}$ in relation to $A$ and $C$ with velocity $\vec{u}$ in relation to $B$ and $\Lambda_{\vec{v}}: A \rightarrow B$, $\Lambda_{\vec{u}}: B \rightarrow C$. Then, by Proposition \ref{properties}, we have
	\begin{eqnarray*}
		L(\Lambda_{\vec{u}} \circ \Lambda_{\vec{v}}) = L( \Lambda_{\mathbcal{R}} \Lambda_{\vec{u} \, \oplus_{\mathcal{S}} \, \vec{v}}) = \Gamma_{\vec{u} \, \oplus_{\mathcal{C}} \, \vec{v}} = \Gamma_{\vec{u}} \circ \Gamma_{\vec{v}} = L(\Lambda_{\vec{u}}) \circ L(\Lambda_{\vec{v}}).
	\end{eqnarray*}
	Now, let $A \in \mathbcal{Lor}$. Then,
	\begin{eqnarray*}
		L(\mathrm{id}_A) = L(\Lambda_{\vec{0}}) = \Gamma_{\vec{0}} = \mathrm{id}_{A} = \mathrm{id}_{L(A)}.
	\end{eqnarray*}
	Therefore, $L$ defines indeed a functor.
	
	Because of the way we defined the categories $\mathbcal{Lor}$ and $\mathbcal{Gal}$, we have our next result.
	\begin{prop}
		$L$ is full and faithful.
	\end{prop}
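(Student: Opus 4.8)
The plan is to reduce the statement to the counting fact already established in Proposition~\ref{propuni}: between any two inertial frames of reference $A$ and $B$ there is exactly one Galilean transformation and exactly one Lorentz transformation. In categorical language this says that for all objects $A,B$ the hom-sets $\mathbcal{Lor}(A,B)$ and $\mathbcal{Gal}(A,B)$ are singletons. Since the limit functor acts as the identity on objects, its action on arrows is, for each ordered pair $(A,B)$, a function
\[
L_{A,B}\colon \mathbcal{Lor}(A,B)\longrightarrow \mathbcal{Gal}\big(L(A),L(B)\big)=\mathbcal{Gal}(A,B)
\]
between one-element sets.

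First I would recall the relevant definitions: $L$ is \emph{full} when every $L_{A,B}$ is surjective and \emph{faithful} when every $L_{A,B}$ is injective. Then I would simply observe that a map between two singletons is automatically a bijection, so both conditions hold at once. To make this concrete one may fix $A$ and $B$, let $\vec{v}$ be the unique relative velocity of $B$ with respect to $A$, so that $\mathbcal{Lor}(A,B)=\{\Lambda_{\vec{v}}\}$ and $\mathbcal{Gal}(A,B)=\{\Gamma_{\vec{v}}\}$, and note that by the very definition of $L$ one has $L(\Lambda_{\vec{v}})=\Gamma_{\vec{v}}$; hence $L_{A,B}$ is precisely the forced bijection between these singletons.

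The only step that deserves attention --- and the nearest thing to an obstacle --- is justifying that each hom-set is genuinely a singleton and not merely nonempty. This is exactly the uniqueness half of Premise~\ref{pr2}, in its classical and relativistic refinements, repackaged as Proposition~\ref{propuni}: it is the uniqueness of the connecting velocity, and therefore of the associated transformation, that forces $|\mathbcal{Lor}(A,B)|=|\mathbcal{Gal}(A,B)|=1$. Once this is granted, fullness and faithfulness follow with no further computation --- in particular nothing about the Thomas rotation $\mathbcal{R}$ or the $c\to+\infty$ limit is needed here, those ingredients having already been spent in checking that $L$ is a functor at all. I would also remark in passing that, combined with the bijectivity of $L$ on objects, the argument in fact exhibits $L$ as an isomorphism of categories, with strict inverse sending $\Gamma_{\vec{v}}\mapsto\Lambda_{\vec{v}}$; but the weaker fully faithful statement is all that is claimed here.
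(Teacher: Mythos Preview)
Your argument is correct and matches the paper's own proof essentially line for line: both identify $\mathbcal{Lor}(A,B)=\{\Lambda_{\vec v}\}$ and $\mathbcal{Gal}(L(A),L(B))=\{\Gamma_{\vec v}\}$ via Proposition~\ref{propuni} and conclude that $L_{A,B}$, being a map between singletons, is automatically bijective. Your additional remarks (that no Thomas-rotation or $c\to\infty$ input is needed here, and that $L$ is in fact an isomorphism of categories) are accurate observations that go slightly beyond what the paper states, but the core proof is the same.
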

	\begin{proof}[\unskip\nopunct]
		\textbf{Proof:}
		Let $A, B$ be inertial frames of reference, such that $B$ moves with velocity $\vec{v}$ in relation to $A$. Thus, as $\mathbcal{Lor}(A, B) = \{\Lambda_{\vec{v}}\}$ and $\mathbcal{Gal}(L(A), L(B)) = \{\Gamma_{\vec{v}}\}$, the map $\mathbcal{Lor} \rightarrow \mathbcal{Gal}$ is injective and surjective. Therefore, $L$ is full and faithful.
	\end{proof}
	
	Moreover, we can show the proposition below. 
	\begin{prop}
		L preserves limits.
	\end{prop}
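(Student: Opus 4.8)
The plan is to mimic the proof that $\mathbcal{Gal}$ is complete, transporting a limit cone in $\mathbcal{Lor}$ across $L$. Fix a small category $\boldsymbol{I}$ and a diagram $D\colon \boldsymbol{I} \to \mathbcal{Lor}$, and let $\big(A \xrightarrow{\Lambda_{\vec{v}_I}} D(I)\big)_{I \in \boldsymbol{I}}$ be a limit cone over $D$; such cones exist since $\mathbcal{Lor}$ is complete, and in fact any object may serve as apex. Since $L$ has already been shown to be a functor, applying $L$ yields a family $\big(A \xrightarrow{\Gamma_{\vec{v}_I}} L(D(I))\big)_{I \in \boldsymbol{I}}$, which is automatically a cone over $L\circ D\colon \boldsymbol{I} \to \mathbcal{Gal}$: for every $f\colon I \to J$ in $\boldsymbol{I}$ the relevant triangle commutes because $L$ preserves composition and because $L(D(f)) \circ \Gamma_{\vec{v}_I}$ and $\Gamma_{\vec{v}_J}$ are both morphisms $A \to L(D(J))$ in $\mathbcal{Gal}$, hence equal by uniqueness of Galilean transformations (Proposition \ref{propuni}).

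Next I would check the universal property of this image cone in $\mathbcal{Gal}$. Let $\big(B \xrightarrow{\Gamma_{\vec{w}_I}} L(D(I))\big)_{I \in \boldsymbol{I}}$ be an arbitrary cone over $L\circ D$. By Proposition \ref{propuni} there is a unique Galilean transformation $\Gamma_{\vec{w}}\colon B \to A$, and for each $I$ the morphisms $\Gamma_{\vec{v}_I} \circ \Gamma_{\vec{w}}$ and $\Gamma_{\vec{w}_I}$ are again morphisms $B \to L(D(I))$, hence coincide; so $\Gamma_{\vec{w}}$ is a mediating morphism, and it is the only one for the same reason. Therefore $L$ sends a limit cone over $D$ to a limit cone over $L\circ D$, i.e. $L$ preserves limits of shape $\boldsymbol{I}$; as $\boldsymbol{I}$ is arbitrary, $L$ preserves all small limits.

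There is essentially no obstacle here, precisely because every hom-set of both categories is a singleton; indeed the same bookkeeping shows $L$ is bijective on objects (since $L(A)=A$) and on each hom-set (both $\mathbcal{Lor}(A,B)$ and $\mathbcal{Gal}(A,B)$ being the singleton indexed by the relative velocity), so $L$ is an isomorphism of categories, from which preservation of all limits and colimits is immediate. The one point deserving a word of care is compatibility of $L$ with the diagram itself: for a diagram valued in $\mathbcal{Lor}$ whose structure morphisms $D(f)$ may secretly carry a Thomas rotation $\mathbcal{R}$, one notes that $L(D(f))=\Gamma_{\vec{u}}$ still depends only on the underlying relative velocity $\vec{u}$, so $L\circ D$ is a genuine diagram in $\mathbcal{Gal}$ and the cone identities pass through verbatim. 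This is the only place where the relativistic subtleties enter, and Proposition \ref{properties} together with the continuity argument preceding the definition of $L$ already accounts for it.
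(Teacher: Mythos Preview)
Your argument is correct and is essentially the paper's approach unpacked: the paper simply cites the earlier Corollary that every object of $\mathbcal{Gal}$ is a limit of any small diagram, so the $L$-image of a limit cone is automatically a limit cone, whereas you reprove that universal property for $L\circ D$ by the same uniqueness-of-morphisms reasoning (Proposition~\ref{propuni}). Your additional remark that $L$ is an isomorphism of categories is a nice shortcut not made explicit in the paper, but the underlying mechanism---singleton hom-sets forcing everything---is identical.
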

	\begin{proof}[\unskip\nopunct]
		\textbf{Proof:}
		Given a small category $\boldsymbol{I}$ and a diagram $D: \boldsymbol{I} \rightarrow \mathbcal{Lor}$, we have already seen that any object in $\mathbcal{Lor}$ is a limit of this diagram. Still, as any object in $\mathbcal{Gal}$ is a limit of the diagram $L \circ D$, in particular, the image of a limit in $\mathbcal{Lor}$ through $L$ is a limit in $\mathbcal{Gal}$. Therefore, $L$ preserves limits.
	\end{proof}
	
	Now, we can use the adjoint functor theorem \cite{leinster_basic_2014} to find a funtor $\mathbcal{Gal} \rightarrow \mathbcal{Lor}$.
	
	\begin{prop}
		$L$ has a left adjoint.
	\end{prop}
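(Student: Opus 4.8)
The goal is to show that the limit functor $L : \mathbcal{Lor} \to \mathbcal{Gal}$ has a left adjoint. The natural strategy is to invoke the General Adjoint Functor Theorem (GAFT): $L$ has a left adjoint provided $\mathbcal{Lor}$ is complete (already proved), $L$ preserves small limits (already proved), and $L$ satisfies the solution set condition — for each object $A \in \mathbcal{Gal}$ there is a small set of arrows $A \to L(B_i)$ through which every arrow $A \to L(B)$ factors. So the only thing left to verify is the solution set condition. First I would recall the statement of GAFT from \cite{leinster_basic_2014}, noting that the hypotheses on $\mathbcal{Lor}$ (completeness, local smallness) have been established in Section~\ref{sec.4}, and that $L$ preserving limits is the previous proposition. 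Then the work reduces to exhibiting a solution set.

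**The key step.** For the solution set condition, fix an inertial frame $A \in \mathbcal{Gal}$. I would argue that the one-element family consisting of the identity arrow $\mathrm{id}_A : A \to L(A) = A$ already works. Indeed, given any object $B \in \mathbcal{Lor}$ and any morphism $\Gamma_{\vec v} : A \to L(B) = B$ in $\mathbcal{Gal}$, uniqueness of transformations (Proposition~\ref{propuni}) tells us there is a unique Lorentz morphism $\Lambda_{\vec v} : A \to B$ with $L(\Lambda_{\vec v}) = \Gamma_{\vec v}$, and trivially $\Gamma_{\vec v} = L(\Lambda_{\vec v}) \circ \mathrm{id}_A$. So every arrow out of $A$ into the image of $L$ factors through $\mathrm{id}_A : A \to L(A)$, and $\{\mathrm{id}_A\}$ is a (one-element, hence small) solution set. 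Since $L$ is full and faithful and both categories have all objects as zero objects / limits, one can even note that $L$ is essentially surjective and the left adjoint will turn out to be the ``identity-on-objects'' functor sending $\Gamma_{\vec v} \mapsto \Lambda_{\vec v}$; but the cleanest path is simply to check GAFT's hypotheses and let the theorem produce the adjoint.

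**The anticipated obstacle.** The genuinely delicate point is making sure the solution set argument is not circular — one must use only Proposition~\ref{propuni} (uniqueness of the Galilean/Lorentz transformation between two given frames), not the not-yet-proven existence of the adjoint. I would be careful to phrase the factorization purely in terms of the already-established fact that $\mathbcal{Lor}(A,B)$ and $\mathbcal{Gal}(A,B)$ are singletons whenever $B$ moves with a definite velocity relative to $A$, together with fullness of $L$. A secondary subtlety is confirming that GAFT applies in this setting: $\mathbcal{Lor}$ and $\mathbcal{Gal}$ are locally small (hom-classes are singletons, hence sets) and complete, so all hypotheses are met. Once the solution set condition is in hand, the conclusion is immediate, and one can remark that the resulting left adjoint $R : \mathbcal{Gal} \to \mathbcal{Lor}$ is itself full and faithful and identity-on-objects, so that $L$ and $R$ exhibit $\mathbcal{Gal}$ and $\mathbcal{Lor}$ as equivalent categories — a conceptual payoff worth stating even if the bare proposition only asks for the existence of the adjoint.
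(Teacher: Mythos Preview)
Your proposal is correct and follows essentially the same route as the paper: both verify the hypotheses of the General Adjoint Functor Theorem, with the only remaining work being the solution-set (weakly initial set) condition for each comma category $(A \Rightarrow L)$, established via Proposition~\ref{propuni}. The only cosmetic difference is that the paper shows \emph{every} object of $(A \Rightarrow L)$ is initial, whereas you single out the specific object $(A,\mathrm{id}_A)$; both arguments rest on the same uniqueness fact and yield the same conclusion.
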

	\begin{proof}[\unskip\nopunct]
		\textbf{Proof:}
		Notice that given $A,B \in \mathbcal{Lor}$, such that $B$ moves with velocity $\vec{v}$ in relation to $A$, we have the set $\mathbcal{Lor}(A, B) = \{\Lambda_{\vec{v}}\}$. Hence, $\mathbcal{Lor}$ is a small and complete category.
		
		Now, let $A \in \mathbcal{Gal}$ and consider a category $(A \Rightarrow L)$. Let $B, B' \in \mathbcal{Lor}$ such that $L(B)$ moves with velocity $\vec{v}$ in relation to $A$ and $L(B')$ with velocity $\vec{u}$ in relation to $A$. Thus, $(B, \Gamma_{\vec{v}})$ and $(B', \Gamma_{\vec{u}})$ are objects of $(A \Rightarrow L)$. Moreover, by Proposition \ref{propuni}, there exists a unique Lorentz transformation $\Lambda: B \rightarrow B'$ (which may be composed by a Thomas rotation, irrelevant as it tends to the identity with the functor) and, by the uniqueness of the maps in $\mathbcal{Gal}$, the diagram
		\begin{eqnarray*}
			\begin{tikzcd}
				A \arrow[swap]{dr}{\Gamma_{\vec{u}}} \arrow[]{r}{\Gamma_{\vec{v}}} & B \arrow[]{d}{L(\Lambda)}
				\\
				& B'
			\end{tikzcd}
		\end{eqnarray*}
		commutes, i.e., $\Lambda_{\vec{w}}: (B, \Gamma_{\vec{v}}) \rightarrow (B', \Gamma_{\vec{u}})$ is a map in $(A \Rightarrow L)$ and it is unique, again by Proposition \ref{propuni}. Then, $(B, \Gamma_{\vec{v}})$ is initial in $(A \Rightarrow L)$ and, in particular, there exists a weakly initial set in $(A \Rightarrow L)$ for every $A \in \mathbcal{Gal}$. Therefore, by the general adjoint functor theorem, $L$ admits a left adjoint, $M: \mathbcal{Gal} \rightarrow \mathbcal{Lor}$.
	\end{proof}
	
	Hence, it is possible to define a map that takes us from the classical context to a relativistic one, which is not surprising, given the extremely similar structure of the categories $\mathbcal{Lor}$ and $\mathbcal{Gal}$.
	
	\section{Alternative formulations}
	\label{sec.6}
	
	The categories defined so far are just part of the Lorentz and Galilean groups. In fact, these groups still contain rotations and translations (in the case of Poincaré group). To add these transformations, we may define a category which objects are ordered pairs $(A, \mathcal{O})$, where $A$ is an inertial frame of reference and $\mathcal{O}$ is a point in $\mathcal{E}^3$ called the origin. By doing this, we can add rotations and translations in $\mathbcal{Lor}$ and $\mathbcal{Gal}$ so that rotations are automorphisms and translations act on the second entry of the ordered pair, changing the origin. In this way, these categories can be represented by the diagram
	\begin{equation*}
		\begin{tikzcd}
			(A,\mathcal{O'}_A) \arrow[bend left]{ddrr}{} \arrow[bend right]{ddr}{} \arrow{dr}{T} \arrow[loop above]{l}{R}
			\\
			& (A, \mathcal{O}_A) \arrow{d}{} \arrow{dr}{\Lambda} \arrow[shift left=1ex]{ul}{} \arrow[loop above]{l}{R}
			\\
			& (B, \mathcal{O}_B) \arrow[]{r}{} \arrow[bend left, shift left=1ex]{uul}{\Lambda' \circ T} \arrow[shift left=1ex]{u}{\Lambda'} \arrow[loop below]{l}{R} 
			& (C,\mathcal{O}_C) \arrow[bend right, shift right=1ex,swap]{uull}{\Lambda \circ T} \arrow[shift left=1ex]{ul}{} \arrow[shift left=1ex]{l}{\Lambda''} \arrow[loop below]{l}{R}
		\end{tikzcd}
	\end{equation*}
	where $T$ are translations, $R$ rotations and $\lambda$ Lorentz (or Galilean) boosts.
	
	By using this definition, we lose the uniqueness of the maps between different objects as $\Lambda \circ \mathrm{id}_A \neq \Lambda \circ R$, for some rotation $R$ and some Lorentz (or Galilean) pure boost $\Lambda$. Note, however, that the category we defined before can be seen as a subcategory of this last one, giving us a partial result.
	
	Another way of adding these transformations would be by defining the objects as triples $(A, \mathcal{O}, \mathcal{B})$, where $A$ is an inertial frame of reference, $\mathcal{O} \in \mathcal{E}^3$ is an origin and $\mathcal{B}$ is a basis of $\mathcal{ST}$. By doing this, we can maintain the uniqueness of maps between distinct objects.
	
	The reason we did not use this last definition is the loss of the interpretation that the objects are entities (or, operationally, people). We mean that, to each entity in the universe, we know they are in a certain point origin and they are associated to an inertial frame of reference (we are ignoring the non-inertial case). However, the choice of basis is arbitrary: this supposed entity can choose any basis of $\mathcal{E}^3$ to describe their associated space.
	
	There still exists an interesting interpretation to define a category using pairs. With this interpretation, where objects are pairs $(A, \mathcal{O})$ of an inertial frame of reference and a origin, we know that the spaces defined by them are intrinsically distinct, that is, if an entity moves in relation to another, the events of spacetime experienced by each one of them are different and the same occurs for the entities in distinct origins of the Universe. Thus, the objects are different for a more fundamental reason than a simple arbitrary choice of basis.
	
	An important reason for us to have a more profound justification to say that the objects are different is the fact that our structure is still equivalent to that of a group. As observed previously, groups seem as categories that do not posses (in general) initial objects, i.e., for any group, we can define anything as an object in the category, ``separating'' the maps of the group, obtaining the structure defined here. However, for any group, this definition would be completely arbitrary, while in the case of Lorentz and Galilean groups, there seems to exist a fundamental way of defining distinct objects in the category. Interestingly, this also appears to be consistent with the principle of relativity, as there are no privileged frames.

	\section{Conclusions}
	\label{sec.7}
	
	In this work we have presented a non-trivial example of category theory applied to physics. We make use of its unifying power to connect frames of reference in the relativity context. It seems to be useful once group theory does not differentiate transformations of coordinates in the same frame (rotations / translations) or in different frames (boosts).
	
	Leveraging on a fully operational prescription, different inertial frames of reference are only differentiated by their relative movement. This restriction allows us to construct the $\mathbcal{Gal}$ and $\mathbcal{Lor}$ categories, whose objects are frames of reference and morphisms are clearly boosts connecting them. The technical requirement to call $\mathbcal{Gal}$ and $\mathbcal{Lor}$ categories are in fact obeyed: the composition of boosts is again a boost (up to a Thomas rotation in the relativistic case) and a boost parametrized by the velocity $\vec{v} = \vec{0}$ is the identity. One of our central results consists of showing that two categories are complete, that is, for every small category $\boldsymbol{I}$, $\mathbcal{Gal}$ and $\mathbcal{Lor}$ have limits of shape $\boldsymbol{I}$. As a corollary, any inertial frame is a small limit in the categories $\mathbcal{Gal}$ and $\mathbcal{Lor}$. Translating it to the physics language, we are concluding that there are no privileged frames. We have also provided alternative formulations to take into account rotations and translations and the result of having no special frame, or even direction and origin, still holds. Thus, we may infer, by pure categorical terms, the principle of relativity. In effect, a particular law of physics is valid in a frame, if and only if, it also holds true in any other, due to their equivalence so exposed. 
	
	Parallel to the mean result stated above, and as a final comment, we have also shown that the connection between $\mathbcal{Gal}$ and $\mathbcal{Lor}$ is a functorial one. The Transition from $\mathbcal{Lor}$ to $\mathbcal{Gal}$ is directly obtained by imposing the analytical limit $c \rightarrow +\infty$. On the other hand, it is also guaranteed by the general adjoint functor theorem the return from $\mathbcal{Gal}$ to $\mathbcal{Lor}$, although the theorem does not specify the correct inversion from classical to the relativistic regime. 
	
	\section*{Acknowledgments}
	This work was supported by Programa Institucional de Bolsas de Iniciação Científica - XXXIII BIC/Universidade Federal de Juiz de Fora - 2020/2021, project number 47868.
	
	The authors are in debt with C. Duarte. He firstly presented us to both Category and Resource theories and is permanently encouraging us to apply new abstract mathematical formalism to description of the physical reality.

\end{document}